\newtheorem{theorem}{\textbf{Theorem}}
\newtheorem{lemma}{\textbf{Lemma}}
\newtheorem{remark}{\textbf{Remark}}
\let\sss = \scriptscriptstyle
\begin{document}
%

\title{Optimal Hybrid Full-Duplex/Half-Duplex Scheme for Buffer Aided Relay Systems}

\author{
\IEEEauthorblockN{Cheng~Li\IEEEauthorrefmark{1},~Bin~Xia\IEEEauthorrefmark{1},~Pihe~Hu\IEEEauthorrefmark{1}，~Yao~Yao\IEEEauthorrefmark{3}}

\IEEEauthorblockA{\IEEEauthorrefmark{1}Department of Electronic Engineering, Shanghai Jiao Tong University, Shanghai, China}


\IEEEauthorblockA{\IEEEauthorrefmark{3}Huawei Technologies Co. Ltd}

Emails: \{lichengg, bxia, hupihe\}@sjtu.edu.cn, yyao@eee.hku.hk
}

\maketitle
\begin{abstract}
Full-duplex (FD) communication has received great interest in recent years due to the potential of doubling the spectral efficiency. However, how to alleviate the detrimental effects of the residual self-interference (RSI) incurred by the FD mode is still a challenging problem. In this paper, focusing on the statistical throughput maximization, we propose an optimal hybrid FD/half-duplex (HD) scheme for the one-way FD buffer aided relay system. To solve this problem, we divide the system into four different transmission modes and formulate the problem as a binary integer programming problem. By relaxing the binary variables to be continuous ones, we solve the problem using the Karush-Kuhn-Tucker (KKT) optimal conditions. We obtain the selection probability of each mode based on the instantaneous channel outage states. The proposed scheme not only achieves the optimal FD or HD mode selection, but also realizes adaptive source-to-relay or relay-to-destination link selection. Simulation results show that the proposed scheme offers 95\% maximum gain over the HD counterparts.
\end{abstract}


%

\section{Introduction}
Deploying relay nodes in networks is an efficient way to improve the quality of service of cell edge users\cite{6146495}. Field measurements have shown that relay nodes can improve the system throughput and enhance the outdoor to indoor coverage\cite{6666597}. Owing to the various advantages, relay communications have received great interest. For instance, in \cite{5557651}, the outage performance was studied for the bidirectional half-duplex (HD) decode-and-forward (DF) relay systems.
Recently, the full-duplex (FD) mode, which receives and transmits in the same frequency band simultaneously, was proposed \cite{6832464}.
In \cite{7410116}, the authors showed that the FD mode could efficiently improve the performance of the relay systems over the HD counterparts.
However, the performance gain was still greatly limited by the residual self-interference (RSI) after self-interference cancellation. In \cite{7568989}, the authors analyzed the outage performance of the two-way FD relay system with RSI. The results revealed that the HD counterparts outperformed the FD system in the high RSI regime.

To overcome the detrimental effects of the RSI, the authors in \cite{5961159} have proposed a hybrid HD/FD scheme based on the instantaneous transmission rate. However, the authors did not consider the buffer at the relay node. The received packets needed to be forwarded immediately. Hence, the end-to-end performance was limited by the worse one of the $S$-$R$ and $R$-$D$ links. In \cite{6330084}, the authors have proposed an adaptive link selection protocol for the traditional HD relay system with a buffer.
The results revealed that the buffer could efficiently improve the end-to-end throughput.
Intuitively, to deploy FD mode in the aforementioned systems will improve the system performance as it bears the potential to double the spectral efficiency. However, we cannot simply extend the previous schemes to the FD cases due to the simultaneous transmission and reception. In addition, the RSI, which is a critical factor that limits the system performance, should be carefully treated as well. The relaying scheme and the hybrid FD/HD mode need to be re-designed to exploit the maximum potential of the FD mode. To the best of the authors' knowledge, few works has been done on this topic.

To utilize the advantages of the buffer, in \cite{7248607,8094980}, the authors have designed the hybrid FD/HD protocols to improve the system throughput. In these protocol the FD mode is always preferred whenever it is available. In fact, we find that the FD mode is not always preferred. In addition, in \cite{8094980}, the HD and FD mode may be both possible. In this case, we design the selection probabilities to select different modes. In this paper, we are dedicated to design a hybrid FD/HD protocol to maximize the system statistic throughput over the infinite time horizon for the one-way FD DF relay system. To solve this problem, we first divide the system into four different transmission modes and then formulate the problem as a binary integer programming problem. By relaxing the binary variables to be continuous ones, we solve the problem using the Karush-Kuhn-Tucker (KKT) optimal conditions. The proposed scheme not only achieves the optimal FD or HD mode selection but also realizes adaptive $S$-$R$ or $R$-$D$ link selection. In addition, the general expressions of the system throughput is derived. Simulation results validate that the proposed scheme effectively overcomes the detrimental effects of the RSI and offers 95\% maximum throughput gain over the HD counterparts.




\section{System Model}
In this section, we describe the system model, including the channel mode, basic transmission modes and different channel gain regions.
\begin{figure}[ht]
  \centering
  \includegraphics[width=2.8in]{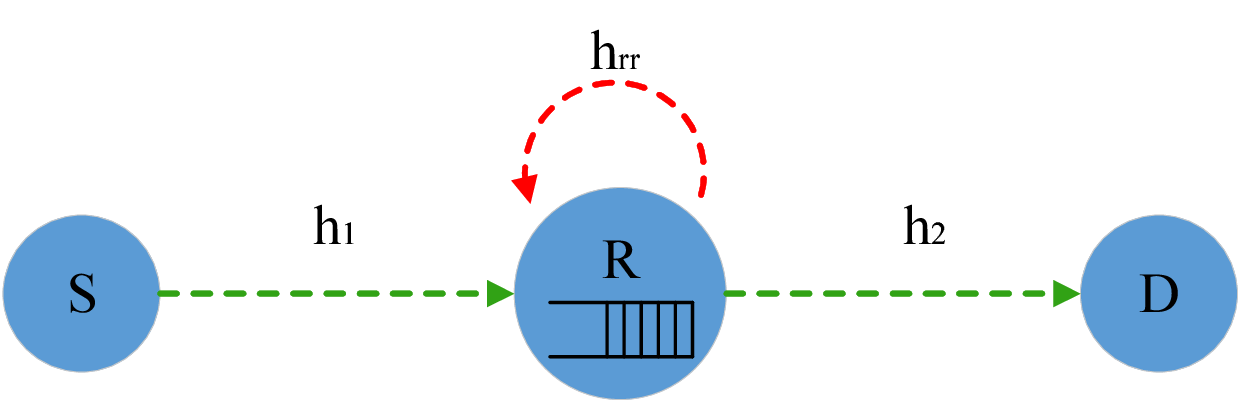}\\
  \vspace{-3mm}
  \caption{The three-node one-way FD DF relay system model.}
  \label{system_model}
\end{figure}

\subsection{Channel Model}
The system model is depicted in Fig. 1. The direct link between $S$ and $D$ does not exist. The relay node has the capability to work in the full-duplex (FD) mode and is equipped with a buffer $B$. In addition, we denote the channel coefficients of the $S$-$R$ and $R$-$D$ links as $h_{1}(i)$ and $h_{2}(i)$ in the $i$-th time slot, respectively. The block fading channels are considered. We assume that $|h_{1}(i)|$ and $|h_{2}(i)|$ are subject to the independent, stationary and ergodic random distributions. All the channels are contaminated by the thermal noise. When the relay node works in the FD mode, the $S$ to $R$ link will be impaired by the RSI as well after self-interference cancellation. We use $\gamma_{1}^{\sss F}(i)$, $\gamma_{1}^{\sss H}(i)$ and $\gamma_{2}^{}(i)$ to denote the signal-to-interference-plus-noise ratios (SINRs) of the $S$-$R$ link in the FD mode, HD mode and the $R$-$D$ link, respectively, which are given by
\begin{equation}\label{1684179434}
  \gamma_{1}^{\sss F}(i)=\frac{P_{1}g_{1}(i)}{I_{R}+\sigma_{r}^2},\  \gamma_{1}^{\sss H}(i)=\frac{P_{1}g_{1}(i)}{\sigma_{r}^2},\ \gamma_{2}^{}(i)=\frac{P_{2}g_{2}(i)}{\sigma_{d}^2},
\end{equation}
where $P_{1}$ and $P_{2}$ denote the transmit powers of $S$ and $R$, respectively. $g_{1}(i)=|h_{1}(i)|^2$ and $g_{2}(i)=|h_{2}(i)|^2$ denote the channel gains of the $S$-$R$ link and $R$-$D$ link, respectively. $\sigma_{r}^2$ and $\sigma_{d}^2$ denote the variances of the noise at $S$ and $D$, respectively. $I_{R}=K_{R}P_{2}$, where $K_{R}$ is the indicator of the RSI level. In this paper, we consider the fixed rate transmission scenario with rate $R_{0}$, which has been considered in various applications, such as packet transmission. In addition, we assume that the relay node knows the global CSI.

\subsection{Different Transmission Modes}
In the system, there exists four different transmission modes, denoted by $M_{1}$, $M_{2}$, $M_{3}$, $M_{4}$. To be specific, in the mode $M_{1}$, the source $S$  sends messages to the relay $R$, but $R$ keeps silent and the messages are stored in the buffer $B$. In the mode $M_{2}$, the source $S$ keeps silent, but the relay $R$ extracts messages from the buffer $B$ and forwards them to the destination $D$. In the mode $M_{3}$, the system works in the FD mode, i.e., $S$ and $R$ transmit signals simultaneously.

In order to specify which mode is selected, we let $d_{j}(i) \in \{0,1\}$ to serve as the mode indicator. For instance, when the mode $M_{j}$ is selected in the \emph{i-th} time slot, we set $d_{j}(i)$ to $1$, otherwise we set $d_{j}(i)$ to $0$. In each time slot, only one mode can be selected, we obtain that $\sum_{j=1}^4d_{j}(i)=1$.
In order to decide whether the information can be successfully decoded in the mode $M_{j}$, we adopt $O_{j}(i) \in \{0, 1\}$, to serve as the transmission indicator. If the transmission in the mode $M_{j}$ is successful, we set $O_{j}(i)$ to 1, otherwise we set $O_{j}(i)$ to 0. For the mode $M_{4}$, since all the nodes keep silent, we always set $O_{4}=0$.

\subsection{Different Channel Gain Regions}
\begin{figure}[ht]
  \centering
  \includegraphics[width=2.8in]{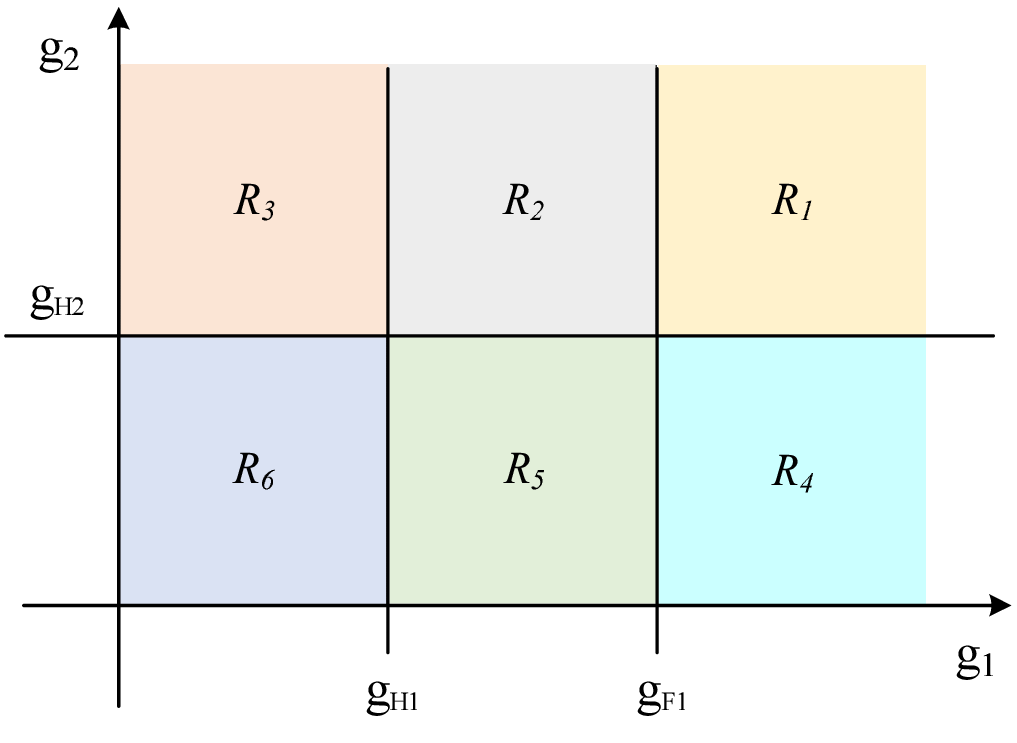}\\
  \vspace{-3mm}
  \caption{Different channel gain regions.}
  \label{system_model}
\end{figure}

For better illustration, we divided the whole channel gain plain into six un-overlapped regions, which are denoted by $\mathcal{R}_{k}, k\in\{1,...,6\}$ as shown in the Fig. 2, where $g_{1}^{\sss H}=\frac{\gamma_{0}\sigma_{r}^2}{P_{1}}$ and $g_{1}^{\sss F}=\frac{\gamma_{0}(I_{\sss R}+\sigma_{r}^2)}{P_{1}}$ denote the outage threshold of the $S$-$R$ link under the HD mode and FD mode, respectively. $g_{2}^{\sss H}=\frac{\gamma_{0}\sigma_{d}^2}{P_{2}}$ denotes the outage threshold of the $R$-$D$ link.
For clarity, the relationships between the transmission indicators and the different channel gain regions are summarized in the following table.
\begin{table}[ht]
\begin{center}
\caption{Viability of Different Modes}
\par
\begin{tabular}{|c|c|c|c|c|c|c|}
  \hline
  Channel Gain Regions & $\mathcal{R}_{1}$ & $\mathcal{R}_{2}$ &$\mathcal{R}_{3}$&$\mathcal{R}_{4}$&$\mathcal{R}_{5}$&$\mathcal{R}_{6}$\\ \hline
  $O_{1}$ & 1 & 1&0&1&1&0 \\ \hline
  $O_{2}$ & 1 & 1 &1&0&0&0\\ \hline
  $O_{3}$&1&0&0&0&0&0 \\
  \hline
\end{tabular}
\end{center}
\end{table}


\vspace{-0mm}
\section{Optimal Hybrid FD/HD Mode Selection Scheme}
In this section, we first formulate the system statistical throughput maximization problem. Then, the optimal mode selection scheme is presented.

\subsection{Problem Formulation}
In this paper, we desire to devise an optimal hybrid FD/HD scheme to maximize the long-term system throughput. First, the rate of the $S$-$R$ link is given by
\begin{equation}\label{74875418}
  R_{1} = \lim_{N\rightarrow +\infty}\frac{1}{N} \sum_{i=1}^{N}[d_{1}(i)O_{2}(i)+d_{3}(i)O_{3}(i)]\times R_{0}.
\end{equation}

Similarly, the maximum average rate of the $R$-$D$ link is given by \begin{equation}\label{47868}
R_{2}=\lim_{N\rightarrow +\infty}\frac{1}{N} \sum_{i=1}^{N}[d_{2}(i)O_{2}(i)+d_{3}(i)O_{3}(i)]\times R_{0}.
\end{equation}

In this paper, we consider a stationary buffer state at the relay node. Hence, we have the following lemma.
\begin{lemma}
For a stationary stochastic system, the average arrival rate needs to be smaller than the departure rate, i.e.,
\begin{equation}\label{82740}
  R_{1}\leq R_{2}.
\end{equation}
\end{lemma}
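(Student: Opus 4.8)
The plan is to establish the inequality $R_1 \le R_2$ as a necessary consequence of buffer stability, using a conservation-of-flow argument on the queue length at the relay. First I would introduce the buffer occupancy $Q(i)$ at the end of time slot $i$. In any slot, the amount of data entering the buffer equals $[d_1(i)O_1(i) + d_3(i)O_3(i)]R_0$ (data successfully received on the $S$-$R$ link under mode $M_1$ or $M_3$), while the amount leaving the buffer is at most $[d_2(i)O_2(i) + d_3(i)O_3(i)]R_0$, since the relay cannot forward more than it holds. Hence $Q(i)$ satisfies the Lindley-type recursion $Q(i) = \max\{Q(i-1) + \text{arrivals}(i) - \text{departures}(i),\, 0\}$, which gives the bound $Q(N) \ge Q(0) + \sum_{i=1}^{N}\big([d_1(i)O_1(i)+d_3(i)O_3(i)] - [d_2(i)O_2(i)+d_3(i)O_3(i)]\big)R_0$.

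Next I would divide this telescoped inequality by $N$ and take $N \to \infty$. The left-hand side $Q(N)/N$ tends to $0$ because the system is assumed to be in a stationary regime, so the expected buffer occupancy is finite and $Q(N)/N \to 0$ almost surely (or in expectation). The term $Q(0)/N$ vanishes as well. What remains is exactly $0 \ge R_1 - R_2$, i.e., $R_1 \le R_2$, after recognizing the two Cesàro averages as the definitions in \eqref{74875418} and \eqref{47868}. One small bookkeeping point: the definition of $R_1$ in \eqref{74875418} is written with $O_2(i)$ multiplying $d_1(i)$, which I read as the arrival indicator for mode $M_1$ (a successful $S$-$R$ reception); I would state this correspondence explicitly so the flow balance lines up term-by-term, and note that the shared term $d_3(i)O_3(i)R_0$ appears on both sides and cancels, which is the intuitive reason the FD mode never by itself violates stability.

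The main obstacle I expect is making the notion of ``stationary buffer state'' precise enough to justify $Q(N)/N \to 0$: one needs either that the buffer is finite (bounded by its capacity $B$, in which case $Q(N)/N \to 0$ trivially since $Q(N) \le B$), or, if $B$ is nominally infinite, that the chosen mode-selection policy induces a positive-recurrent queue with finite stationary mean. Since the paper explicitly posits a stationary buffer state, I would simply invoke that hypothesis — if $B < \infty$ the argument is immediate and clean, so I would emphasize that case. The rest is routine: the flow-conservation recursion and the Cesàro limit are standard, and no delicate estimation is required once the stationarity assumption is in hand.
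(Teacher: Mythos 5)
Your proof is correct and rests on the same underlying idea as the paper's: the paper simply asserts the standard queuing-theory fact that $R_{1}>R_{2}$ would force the queue to grow without bound, whereas you supply the actual argument behind that fact (the Lindley-type recursion, telescoping, and $Q(N)/N\to 0$ under stationarity). Your version is strictly more rigorous than the paper's one-line citation, and you are right to flag that the $d_{1}(i)O_{2}(i)$ term in the definition of $R_{1}$ should be read as $d_{1}(i)O_{1}(i)$ for the flow balance to line up.
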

\begin{proof}
According to the queuing theory, if $R_{1}>R_{2}$, the queue in the buffer will statistically increase to infinity. Hence, we obtain this lemma.
\end{proof}

\begin{lemma}
 The optimal hybrid FD/HD scheme maximizing the system throughput is achieved when the queue at the buffer is at the edge of non-absorb state, which can be expressed as
 \begin{equation}\label{247845348}
  R_{1} = R_{2}.
\end{equation}
\begin{proof}
According to the \emph{Lemma 1}, $R_{1}\leq R_{2}$. If $R_{1} < R_{2}$, the system throughput will be limited by $R_{1}$. However, we can choose the mode $M_{1}$ in more time slots to increase the rate $R_{1}$. Once $R_{1}=R_{2}$, $R_{1}$ will cannot be increased furthermore, otherwise, the system cannot keep stationary.
\end{proof}
\end{lemma}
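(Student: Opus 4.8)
The plan is to combine \emph{Lemma 1} with an interchange (exchange) argument on the scheduling policy. First I would make explicit the fact, implicit in the statement, that for a DF relay with a buffer held in a stationary regime the end-to-end system throughput equals $R_{1}$: every packet delivered to $D$ must first have been received by $R$ and stored in $B$, so the long-run departure rate of the buffer cannot exceed its long-run arrival rate; together with \emph{Lemma 1}, which gives $R_{1}\le R_{2}$, this pins the throughput to $R_{1}$. Hence the optimization reduces to maximizing $R_{1}$ over all feasible mode-selection policies $\{d_{j}(i)\}$, subject to the per-slot constraint $\sum_{j=1}^{4}d_{j}(i)=1$, the viability table relating $O_{j}(i)$ to the regions $\mathcal{R}_{k}$, and the stability constraint $R_{1}\le R_{2}$.

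Next I would argue by contradiction. Suppose an optimal policy attains throughput $R_{1}^{\star}$ with $R_{1}^{\star}<R_{2}^{\star}$. I would then exhibit a feasible perturbation that strictly increases $R_{1}$ while preserving $R_{1}\le R_{2}$: identify a set of time slots, of positive long-run frequency, in which the selected mode contributes nothing to $R_{1}$ (e.g. $M_{4}$, or $M_{2}$ in a slot whose channel gains lie in $\mathcal{R}_{2}$ or $\mathcal{R}_{3}$) yet in which $g_{1}(i)$ lies in a region with $O_{1}(i)=1$, and reassign a small fraction $\epsilon$ of these slots to mode $M_{1}$. Since the gap $R_{2}^{\star}-R_{1}^{\star}>0$ is strict and the time averages defining $R_{1},R_{2}$ depend continuously on the slot-selection frequencies (by ergodicity), for $\epsilon$ small enough the perturbed policy still satisfies $R_{1}\le R_{2}$, is feasible, and has strictly larger $R_{1}$ — contradicting optimality. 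The only degenerate case is when no such reassignable slots exist, i.e. $R_{1}$ already equals its policy-independent maximum; then one instead reassigns a fraction of $M_{2}$/$M_{3}$ slots to $M_{4}$ until $R_{2}$ is lowered to $R_{1}$, which leaves the throughput unchanged. In every case an optimal scheme with $R_{1}=R_{2}$ exists, which is the assertion of the lemma.

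I expect the main obstacle to be the perturbation bookkeeping rather than any conceptual point: one must verify rigorously that (i) whenever $R_{1}$ is not already maximal there is a set of reassignable slots with strictly positive long-run frequency, and (ii) the reassignment can be carried out while maintaining $R_{1}\le R_{2}$, which relies on the strictness of $R_{2}^{\star}-R_{1}^{\star}$ together with a monotonicity/continuity estimate in $\epsilon$ for the ergodic averages in \eqref{74875418} and \eqref{47868}. The throughput $=R_{1}$ identity and the direction of the inequality follow immediately from \emph{Lemma 1} and elementary queueing, so they need only a sentence each.
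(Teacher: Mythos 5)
Your argument is essentially the paper's own proof: the paper gives only the two-sentence informal version (if $R_{1}<R_{2}$, allocate more slots to $M_{1}$ to raise $R_{1}$ until equality, beyond which stationarity fails), and your perturbation/exchange formalization, together with the observation that the delivered throughput is pinned to $R_{1}$ when $R_{1}\le R_{2}$, is a faithful and more rigorous rendering of exactly that idea. One small caution in your degenerate case: reassign only $M_{2}$ slots (not $M_{3}$ slots) to $M_{4}$, since an $M_{3}\to M_{4}$ swap lowers $R_{1}$ and $R_{2}$ by the same amount and therefore cannot close the gap $R_{2}-R_{1}$.
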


The system throughput can be quantified by the average received bits of information at the destination node $D$ in each time slot. Hence, we have
\begin{equation}\label{787984}
  \mathcal{T}=R_{2}.
\end{equation}

\begin{table*}[t]
\newcommand{\tabincell}[2]{\begin{tabular}{@{}#1@{}}#2\end{tabular}}
\begin{center}
\caption{Viability of Different Modes with Different SINR regions}
\par
\begin{tabular}{|c|c|c|c|c|}
  \hline
   & $\mathcal{R}_{1}$ & $\mathcal{R}_{2}$ &$\mathcal{R}_{3}$&$R_{4}\cup \mathcal{R}_{5}$\\ \hline
  $\Psi_{1}$ &$P_{1}^2=1$  &$P_{2}^2=1$&$P_{3}^2=1$&\tabincell{c}{\!\!\!$P_{4}^1=\frac{P_{\mathcal{R}_{1}}+P_{\mathcal{R}_{2}}+P_{\mathcal{R}_{3}}}{P_{\mathcal{R}_{4}}+P_{\mathcal{R}_{5}}}$\!\!\!\\$P_{4}^4=1-P_{4}^1$} \\ \hline
  $\Psi_{2}$ & \tabincell{c}{\!\!\!$P_{1}^2=\frac{P_{\mathcal{R}_{4}}+P_{\mathcal{R}_{5}}-P_{\mathcal{R}_{2}}-P_{\mathcal{R}_{3}}}{P_{\mathcal{R}_{1}}}$\!\!\!\\$P_{1}^3=1-P_{1}^2$}&$P_{2}^2=1$ &$P_{3}^2=1$&$P_{4}^1=1$\\ \hline
  $\Psi_{3}$&$P_{1}^3=1$&\tabincell{c}{\!\!\!$P_{2}^2=\frac{P_{\mathcal{R}_{2}}+P_{\mathcal{R}_{4}}+P_{\mathcal{R}_{5}}-P_{\mathcal{R}_{3}}}{2P_{\mathcal{R}_{2}}}$\!\!\!\\$P_{2}^3=1-P_{2}^2$}&$P_{3}^2=1$&$P_{4}^1=1$ \\ \hline
  $\Psi_{4}$&\tabincell{c}{\!\!\!$P_{1}^1=\frac{P_{\mathcal{R}_{3}}-P_{\mathcal{R}_{2}}-P_{\mathcal{R}_{4}}-\mathcal{R}_{5}}{P_{\mathcal{R}_{1}}}$ \!\!\! \\ $P_{1}^3=1-P_{1}^1$}&$P_{2}^1=1$&$P_{3}^1=1$&$P_{4}^1=1$ \\ \hline
  $\Psi_{5}$&$P_{1}^1=1$ &$P_{2}^1=1$ &\tabincell{c}{\!\!\!$P_{3}^2=\frac{P_{\mathcal{R}_{1}}+P_{\mathcal{R}_{2}}+P_{\mathcal{R}_{4}}+\mathcal{R}_{5}}{P_{\mathcal{R}_{3}}}$\!\!\!\\$P_{3}^4=1-P_{3}^2$}&$P_{4}^1=1$ \\ \hline
\end{tabular}
\end{center}
\end{table*}

It is noted that we abandon the limitation $\min\{Q(i-1),R_{0}\}$. This can be interpreted as that there are only countable number of time slots that the queue length of the buffer is less than $R_{0}$ when the buffer is at the edge of non-absorb state. In other words, the relay node always has enough data stored in the buffer. Hence, when averaged on the infinite time horizon, the effects of $Q(i)<R_{0}$ can be neglected.



Now, based on \emph{Lemma 1} and \emph{Lemma 2}, the considered throughput maximization problem can be formulated as $\mathcal{P}_{1}$, which is given by
\begin{align}\label{473928}
  \mathop {\text{max}} \limits_{d_{j}(i)\in\{0,1\}} \ & \quad\ \mathcal{T} \notag\\
  \text{s.t.\quad C1:} &\quad R_{1}=R_{2},  \notag\\
   \text{C2:} &\quad \sum_{j=1}^{4} d_{j}(i)=1,\ \forall i\notag\\
   \text{C3:} &\quad  d_{j}(i) \in \{0, 1\},\ \forall i, k
\end{align}

However, since the variable $d_{j}(i)$ can only be $1$ or $0$, the original optimization problem $\mathcal{P}_{1}$ is a binary integer programming problem, which is hard to solve. Alternatively, we observe that if we relax the binary variable $d_{j}(i)$ to be continuous one in the interval $[0,1]$, the original problem will transfer to a linear programming problem. For the linear programming problem, the optimal value always obtained at the vertexes of the feasible set, i.e., $d_{j}(0) =1\ \text{or}\ 0$, which is consistent with the original problem \cite{7809043}. The standard form of the relaxed problem $\mathcal{P}_{2}$ is given by
\begin{align}\label{4394574}
  \mathop {\text{min}} \limits_{d_{j}(i)\in[0,1]} \ & \quad\ -\mathcal{T} \notag\\
  \text{s.t.\quad C1:} &\quad R_{1}-R_{2}=0,  \notag\\
   \text{C2:} &\quad \sum_{j=1}^{4} d_{j}(i)-1=0,\ \forall\ i\notag\\
   \text{C3:} &\quad  d_{j}(i)-1\leq0,\ \forall\ i, k\notag\\
   \text{C4:} &\quad -d_{j}(i)\leq 0,\ \forall\ i,k
\end{align}



\subsection{Optimal Hybrid FD/HD Scheme}
Due to that the optimal scheme relies on the statistic CSI, we first present the following different statistic CSI cases
\begin{align}\label{45654654}
  \Psi_{1}: &\quad P_{\mathcal{R}_{3}}\leq P_{\mathcal{R}_{4}}+P_{\mathcal{R}_{5}}-P_{\mathcal{R}_{1}}-P_{\mathcal{R}_{2}},\notag\\
  \Psi_{2}: & \quad  P_{\mathcal{R}_{4}}+P_{\mathcal{R}_{5}}-P_{\mathcal{R}_{1}}-P_{\mathcal{R}_{2}}< P_{\mathcal{R}_{3}}\leq P_{\mathcal{R}_{4}}+P_{\mathcal{R}_{5}}-P_{\mathcal{R}_{2}},\notag\\
  \Psi_{3}: & \quad P_{\mathcal{R}_{4}}+P_{\mathcal{R}_{5}}-P_{\mathcal{R}_{2}} < P_{\mathcal{R}_{3}} \leq P_{\mathcal{R}_{4}}+P_{\mathcal{R}_{5}}+P_{\mathcal{R}_{2}},   \notag\\
  \Psi_{4}: & \quad P_{\mathcal{R}_{4}}+P_{\mathcal{R}_{5}}+P_{\mathcal{R}_{2}} < P_{\mathcal{R}_{3}}\leq  P_{\mathcal{R}_{4}}+P_{\mathcal{R}_{5}}+P_{\mathcal{R}_{2}}+P_{\mathcal{R}_{1}},\notag\\
  \Psi_{5}: &\quad P_{\mathcal{R}_{3}}>P_{\mathcal{R}_{4}}+P_{\mathcal{R}_{5}}+P_{\mathcal{R}_{2}}+P_{\mathcal{R}_{1}},
\end{align}
where $P_{\mathcal{R}_{k}}$ denotes the probability of the channel gain region $\mathcal{R}_{k}$. As the relay node has known the statistic information of the channel variation, thus the probability of the regions from $\mathcal{R}_{1}$ to $\mathcal{R}_{6}$ can be easily derived.

\begin{theorem}
  The optimal hybrid FD/HD mode selection scheme maximizing the system throughput of the considered buffer aided FD relay system with RSI is given in the Table II.
\\

\noindent where $P_{k}^j, k\in\{1,2,3,4,5,6\}, j\in\{1,2,3,4\}$ denotes the selection probability of the mode $M_{j}$ in the region $\mathcal{R}_{k}$. In the region $\mathcal{R}_{6}$, all the active modes are unaccessible, only the inactive mode $M_{4}$ can be selected, i.e., $P_{6}^4=1$. It is noted that the selection probabilities, which equals to $0$, are not given.
\begin{proof}
See Appendix.
\end{proof}
\end{theorem}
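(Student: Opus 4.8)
The plan is to strip the time index from the relaxed problem $\mathcal{P}_{2}$, turning it into a small linear program in the region-wise selection probabilities, and then to read Table II off the KKT conditions of that program.

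\emph{Reduction to a region-wise LP.} Since $g_{1}(i),g_{2}(i)$ are stationary and ergodic and each success indicator $O_{j}(i)$ depends on the channel only through which region $\mathcal{R}_{k}$ contains $(g_{1}(i),g_{2}(i))$, the time-average limits defining $R_{1},R_{2}$ exist and depend on $\{d_{j}(i)\}$ only through the long-run conditional frequencies $P_{k}^{j}$ with which $M_{j}$ is chosen inside $\mathcal{R}_{k}$. With Table I (a slot in $\mathcal{R}_{k}$ yields one unit to $R_{1}$ when $M_{1}$ is picked and $O_{1}=1$ or $M_{3}$ is picked and $O_{3}=1$, and one unit to $R_{2}$ when $M_{2}$ is picked and $O_{2}=1$ or $M_{3}$ is picked and $O_{3}=1$) this gives
\begin{align}
R_{1}/R_{0} &= P_{\mathcal{R}_{1}}(P_{1}^{1}+P_{1}^{3})+P_{\mathcal{R}_{2}}P_{2}^{1}+(P_{\mathcal{R}_{4}}+P_{\mathcal{R}_{5}})P_{4}^{1},\notag\\
R_{2}/R_{0} &= P_{\mathcal{R}_{1}}(P_{1}^{2}+P_{1}^{3})+P_{\mathcal{R}_{2}}P_{2}^{2}+P_{\mathcal{R}_{3}}P_{3}^{2},\notag
\end{align}
with $P_{4}^{1}=P_{5}^{1}$ ($\mathcal{R}_{4}$ and $\mathcal{R}_{5}$ having identical viability), $P_{k}^{j}\ge0$ and $\sum_{j}P_{k}^{j}=1$. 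This is a finite linear program; its optimum is attained at a vertex, so a $\{0,1\}$-valued policy for $\mathcal{P}_{1}$ follows by derandomising within each region, and it suffices to maximise the LP.

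\emph{KKT structure.} I attach a sign-free multiplier $\lambda$ to C1 ($R_{1}-R_{2}=0$), multipliers $\nu_{k}$ to the simplex constraints, and nonnegative multipliers to the box constraints. The stationarity conditions say that in each region mass is placed only on the mode(s) of maximal ``$\lambda$-reward'', where (scaled by $P_{\mathcal{R}_{k}}$, and only when the relevant $O$-indicator equals $1$) $M_{1}$ is worth $-\lambda$, $M_{2}$ is worth $1+\lambda$, $M_{3}$ is worth $1$, and $M_{4}$ is worth $0$. Comparing these rewards region by region pins the relevant multiplier to $\lambda\in[-1,0]$; shows $M_{3}$ strictly beats $M_{1}$ and $M_{2}$ inside $\mathcal{R}_{1}$ except at $\lambda\in\{-1,0\}$; and makes the winner in $\mathcal{R}_{2}$ switch from $M_{2}$ to $M_{1}$ as $\lambda$ crosses $-\tfrac12$. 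Hence randomisation — the only mechanism by which C1 can hold over an interval of $P_{\mathcal{R}_{3}}$ — must occur at one of the three tie levels $\lambda\in\{0,-\tfrac12,-1\}$, at the region(s) tied there ($\mathcal{R}_{1}$ and $\mathcal{R}_{4}\cup\mathcal{R}_{5}$ at $\lambda=0$; $\mathcal{R}_{2}$ at $\lambda=-\tfrac12$; $\mathcal{R}_{1}$ and $\mathcal{R}_{3}$ at $\lambda=-1$).

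\emph{Case analysis.} For each tie level I use the tied region's free probability to enforce $R_{1}=R_{2}$ and then impose the box constraint $0\le P_{k}^{j}\le1$ on that probability; the resulting threshold on $P_{\mathcal{R}_{3}}$ relative to $P_{\mathcal{R}_{1}},P_{\mathcal{R}_{2}},P_{\mathcal{R}_{4}}+P_{\mathcal{R}_{5}}$ is exactly one of $\Psi_{1},\dots,\Psi_{5}$: $\lambda=0$ with the slack spent in $\mathcal{R}_{4}\cup\mathcal{R}_{5}$ yields $\Psi_{1}$; $\lambda=0$ with the slack spent in $\mathcal{R}_{1}$ yields $\Psi_{2}$; $\lambda=-\tfrac12$ with the slack in $\mathcal{R}_{2}$ yields $\Psi_{3}$; $\lambda=-1$ with the slack in $\mathcal{R}_{1}$ yields $\Psi_{4}$; and $\lambda=-1$ with the slack in $\mathcal{R}_{3}$ yields $\Psi_{5}$. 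Solving the linear balance equation for the free probability and setting $P_{k}^{j}$ to $1$ (or $0$) for the remaining strictly-best (or strictly-dominated) modes reproduces every entry of Table II, and the throughput is recovered as $\mathcal{T}=R_{2}$ evaluated at that solution.

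\emph{Main obstacle.} The delicate part is the bookkeeping in the last step: checking that the five regimes are mutually exclusive and cover all $(P_{\mathcal{R}_{1}},\dots,P_{\mathcal{R}_{5}})$, that every probability in Table II lands in $[0,1]$, and — at $\lambda=0$ and $\lambda=-1$, where two regions tie at once — deciding which slack degree of freedom to exhaust first. A cleaner route, which I would in fact use, sidesteps the derivation entirely: take the Table II policy as a candidate, verify primal feasibility (simplex, box, and $R_{1}=R_{2}$), exhibit the multipliers $\lambda\in\{0,-\tfrac12,-1\}$ and $\nu_{k}$ above as a dual certificate satisfying complementary slackness, and invoke LP strong duality to conclude optimality.
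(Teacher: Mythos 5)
Your proposal follows essentially the same route as the paper's appendix: relax to a linear program, attach a single multiplier to the flow-balance constraint $R_{1}=R_{2}$, compare the resulting per-mode rewards region by region (your $-\lambda$, $1+\lambda$, $1$, $0$ are exactly the paper's selection functions $V_{j}$ under $\lambda=-\alpha_{0}$), and split into the five cases $\Psi_{1},\dots,\Psi_{5}$ according to the multiplier value, solving the balance equation for the single randomized probability in each case. Your pinning of the exact tie levels $\lambda\in\{0,-\tfrac{1}{2},-1\}$ (where the paper somewhat loosely uses the intervals $\alpha_{0}\in(0,1)$ and $\alpha_{0}>1$) and the optional dual-certificate verification are refinements of, not departures from, the paper's argument.
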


\begin{figure*}[tbp]
\centering
\subfigure[]{
\label{Fig.sub.1}
\begin{minipage}{2.0in}
\centering
\includegraphics[width=2.2in]{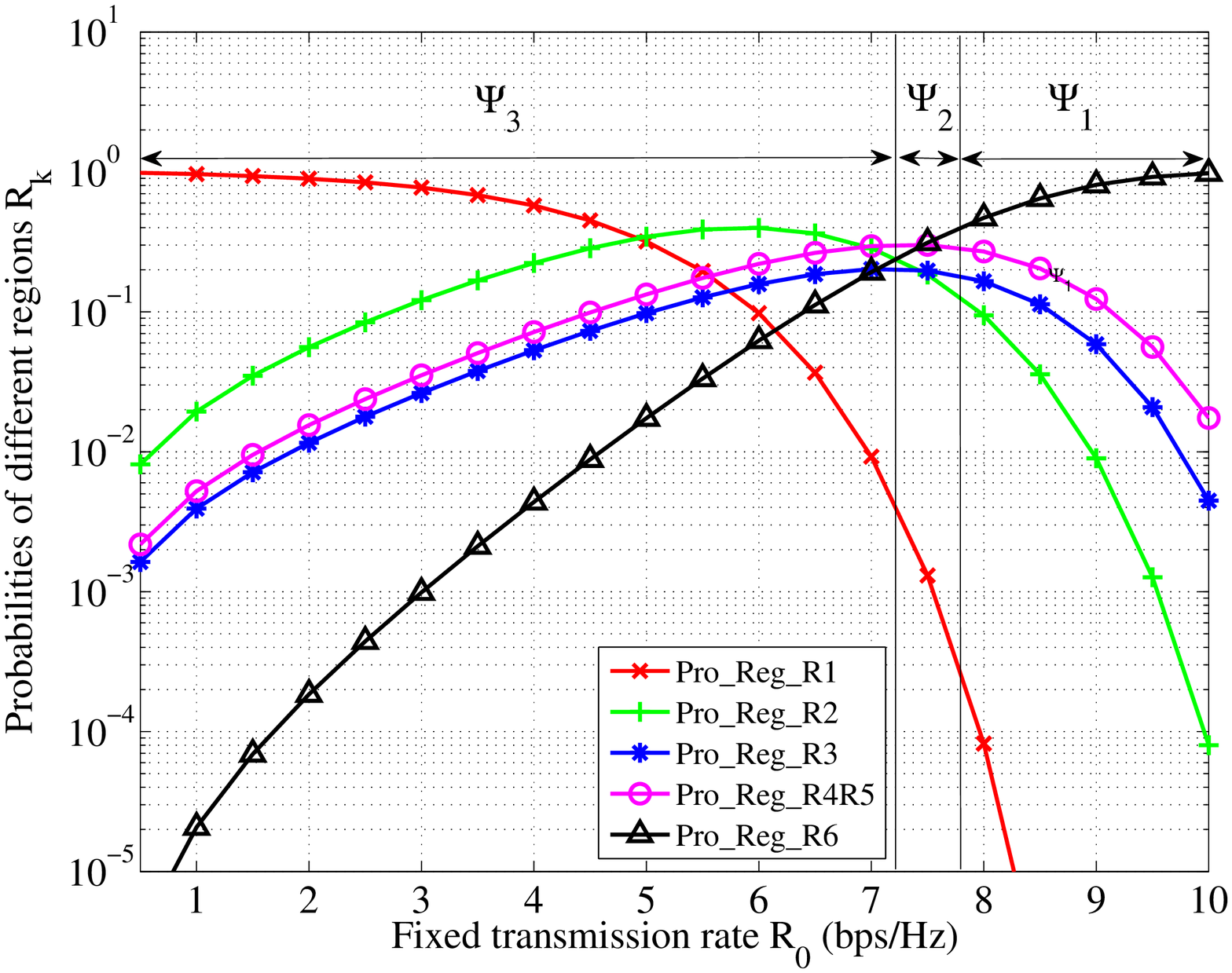}
\end{minipage}
}
\subfigure[]{
\begin{minipage}{2.0in}
\centering
\includegraphics[width=2.2in]{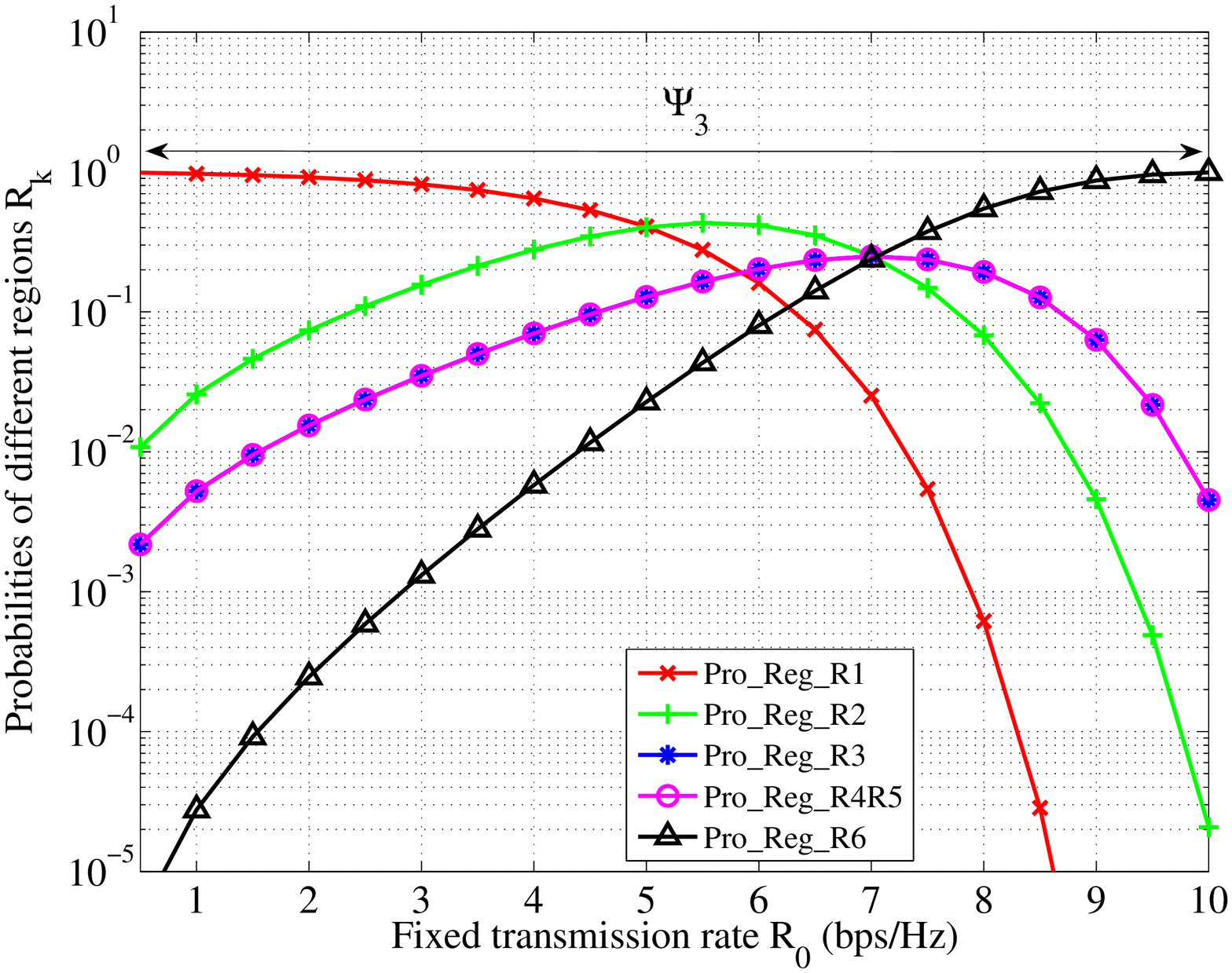}
\end{minipage}
}
\subfigure[]{
\begin{minipage}{2.0in}
\centering
\includegraphics[width=2.2in]{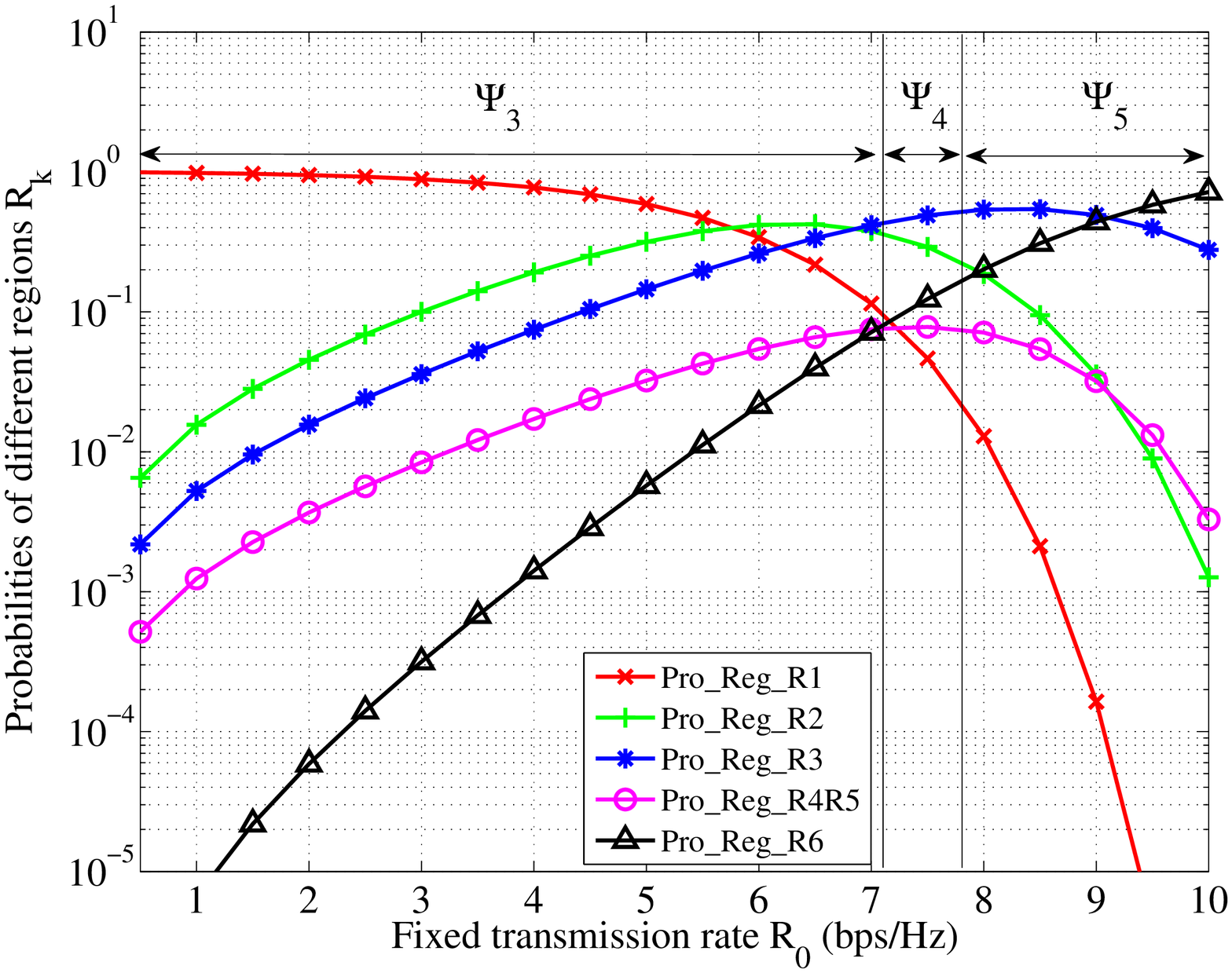}
\end{minipage}
}
\vspace{-3mm}
\caption{Probabilities vs. the fixed transmission rate $R_{0}$ under different system setups, in which 1) $S$-$R$ is stronger than $R$-$D$ link; 2) $S$-$R$ link is equivalently equal to the $R$ to $D$ link; 3) $S$-$R$ link is weaker than $R$-$D$ link.}
\label{Fig.lable}
\end{figure*}

\begin{remark}
The five non-overlapping statistic CSI cases is proposed in the design of the optimal scheme (please see the Appendix). The five cases represent five different relative qualities of the $S$-$R$ and $R$-$D$ links. In the case $\Psi_{1}$, the $S$-$R$ link is far stronger than the $R$-$D$ link. Hence, all the time slots in the region $\mathcal{R}_{1}$ need to be allocated to the mode $M_{2}$ to balance the constraint C1. In the case $\Psi_{2}$, the $S$-$R$ link is slightly stronger than the $R$-$D$ link. We only need to allocate a part of the time slots in the region $\mathcal{R}_{1}$ to the mode $M_{2}$. In the case $\Psi_{3}$, the $S$-$R$ link and the $R$-$D$ link are comparable. By allocating the time slots in the $\mathcal{R}_{2}$ to the mode $M_{1}$ or $M_{2}$ could make the constraint C1 satisfied. Thus, all the time slots in the region $\mathcal{R}_{1}$ could be allocated to the mode $M_{3}$. In the case $\Psi_{4}$ and $\Psi_{5}$, the situation will be contrary, more time slots need to be allocated to the mode $M_{1}$.
\end{remark}

\begin{remark}
\emph{Theorem 1} validates that the FD mode is not always preferred even it is available.  In addition, in the case of $\Psi_{1}$ and $\Psi_{5}$, the FD mode $M_{3}$ has never been selected, which means that the system always works in the pure HD mode. In other words, although the relay node has the capacity to work in the FD mode, the system cannot achieve any performance gain due to the strong disparities of the qualities of the $S$-$R$ link and the $R$-$D$ link.
\end{remark}

\subsection{General Expressions of the System Throughput}
In this subsection, we present the general expressions of the system throughput of the proposed hybrid FD/HD scheme.

\begin{theorem}
  The general expressions of the maximum system throughput  of the proposed hybrid FD/HD scheme corresponding to different statistic CSI cases are given by
  \begin{equation}\label{4214034140}
    \mathcal{T}=\left\{
    \begin{array}{ll}
    \vspace{3mm}
      (P_{\mathcal{R}_{1}}+P_{\mathcal{R}_{2}}+P_{\mathcal{R}_{3}})R_{0}, &\quad\Psi_{1},\ \Psi_{2}\\
    \vspace{3mm}
      (P_{\mathcal{R}_{1}}+\frac{P_{\mathcal{R}_{2}}+P_{\mathcal{R}_{3}}+P_{\mathcal{R}_{4}}+P_{\mathcal{R}_{5}}}{2})R_{0}, &\quad\Psi_{3}\\
       (P_{\mathcal{R}_{1}}+P_{\mathcal{R}_{2}}+P_{\mathcal{R}_{4}}+P_{\mathcal{R}_{5}})R_{0},&\quad\Psi_{4}, \ \Psi_{5}
    \end{array}
    \right.
  \end{equation}
\begin{proof}
  \emph{Theorem 2} can be easily proved using the different selection probabilities shown in the Theorem 1. The system average throughput can be generally obtained by
\begin{equation}\label{13216871}
    \mathcal{T}=[P_{\mathcal{R}_{1}}(P_{1}^2+P_{1}^3)+P_{\mathcal{R}_{2}}P_{2}^2+P_{\mathcal{R}_{3}}P_{3}^2]R_{0}.
\end{equation}

Substituting $P_{j}^k$ in the Theorem 1 into the above expression, the expressions of the system throughput can be obtained.
\end{proof}
\end{theorem}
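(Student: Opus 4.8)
The plan is to reduce the throughput to a single expression in the region probabilities and the mode–selection probabilities of \emph{Theorem 1} (Table II), and then to evaluate that expression in each of the five statistic CSI cases $\Psi_{1},\dots,\Psi_{5}$. First I would use \eqref{787984} to write $\mathcal{T}=R_{2}$, so it suffices to evaluate the long–term $R$-$D$ rate \eqref{47868}. Since $|h_{1}(i)|$ and $|h_{2}(i)|$ are stationary and ergodic, the fraction of slots whose channel gains lie in region $\mathcal{R}_{k}$ converges almost surely to $P_{\mathcal{R}_{k}}$, and inside $\mathcal{R}_{k}$ mode $M_{j}$ is used a fraction $P_{k}^{j}$ of the time. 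Hence the time average in \eqref{47868} equals $R_{0}$ times the expected number of slots in which a mode that actually delivers a packet to $D$ — namely $M_{2}$, or $M_{3}$ which transmits to $D$ while receiving from $S$ — is selected and its transmission succeeds. Reading off the $O_{2}$ and $O_{3}$ rows of Table~I, the only surviving region/mode pairs are $(\mathcal{R}_{1},M_{2})$, $(\mathcal{R}_{2},M_{2})$, $(\mathcal{R}_{3},M_{2})$ and $(\mathcal{R}_{1},M_{3})$, so
\[ \mathcal{T}=\bigl[P_{\mathcal{R}_{1}}(P_{1}^{2}+P_{1}^{3})+P_{\mathcal{R}_{2}}P_{2}^{2}+P_{\mathcal{R}_{3}}P_{3}^{2}\bigr]R_{0} . \]
(It is worth checking in passing that the Table~II policy is feasible, i.e. attains the active constraint $R_{1}=R_{2}$ of \emph{Lemma 2}, so that this value of $R_{2}$ is genuinely the optimum.)

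The second step is to substitute the entries of Table~II into the displayed identity, remembering that any selection probability absent from the table is zero. In $\Psi_{1}$ and $\Psi_{2}$ one has $P_{1}^{2}+P_{1}^{3}=1$ and $P_{2}^{2}=P_{3}^{2}=1$, giving $\mathcal{T}=(P_{\mathcal{R}_{1}}+P_{\mathcal{R}_{2}}+P_{\mathcal{R}_{3}})R_{0}$. In $\Psi_{3}$, $P_{1}^{3}=1$, $P_{3}^{2}=1$ and $P_{2}^{2}=\frac{P_{\mathcal{R}_{2}}+P_{\mathcal{R}_{4}}+P_{\mathcal{R}_{5}}-P_{\mathcal{R}_{3}}}{2P_{\mathcal{R}_{2}}}$; inserting this and collecting the $P_{\mathcal{R}_{3}}$ terms yields $\mathcal{T}=\bigl(P_{\mathcal{R}_{1}}+\tfrac12(P_{\mathcal{R}_{2}}+P_{\mathcal{R}_{3}}+P_{\mathcal{R}_{4}}+P_{\mathcal{R}_{5}})\bigr)R_{0}$. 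In $\Psi_{4}$ the only nonzero relevant terms are $P_{1}^{3}=1-P_{1}^{1}$ with $P_{1}^{1}=\frac{P_{\mathcal{R}_{3}}-P_{\mathcal{R}_{2}}-P_{\mathcal{R}_{4}}-P_{\mathcal{R}_{5}}}{P_{\mathcal{R}_{1}}}$ and $P_{3}^{2}=1$, so $\mathcal{T}=\bigl(P_{\mathcal{R}_{1}}(1-P_{1}^{1})+P_{\mathcal{R}_{3}}\bigr)R_{0}=(P_{\mathcal{R}_{1}}+P_{\mathcal{R}_{2}}+P_{\mathcal{R}_{4}}+P_{\mathcal{R}_{5}})R_{0}$; and in $\Psi_{5}$ only $P_{3}^{2}=\frac{P_{\mathcal{R}_{1}}+P_{\mathcal{R}_{2}}+P_{\mathcal{R}_{4}}+P_{\mathcal{R}_{5}}}{P_{\mathcal{R}_{3}}}$ survives, again giving $(P_{\mathcal{R}_{1}}+P_{\mathcal{R}_{2}}+P_{\mathcal{R}_{4}}+P_{\mathcal{R}_{5}})R_{0}$. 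Assembling the five cases reproduces \eqref{4214034140}.

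The case analysis is purely mechanical; I expect the one step that needs care to be the reduction to the displayed master identity. There one must (a) invoke stationarity and ergodicity to convert the empirical time averages of \eqref{74875418}–\eqref{47868} into the statistic quantities $P_{\mathcal{R}_{k}}$, and (b) decide region by region which modes feed the $R$-$D$ link — in particular that $M_{3}$ does, because it transmits to $D$ while receiving from $S$, and that every contribution is gated by the outage indicators $O_{2},O_{3}$ of Table~I. Everything after that is a short algebraic reduction in each $\Psi_{\ell}$.
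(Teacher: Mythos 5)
Your proof is correct and follows essentially the same route as the paper: it establishes the same master identity $\mathcal{T}=[P_{\mathcal{R}_{1}}(P_{1}^{2}+P_{1}^{3})+P_{\mathcal{R}_{2}}P_{2}^{2}+P_{\mathcal{R}_{3}}P_{3}^{2}]R_{0}$ (which the paper simply asserts, without the ergodicity/Table-I justification you supply) and then substitutes the Table II entries case by case. The only point of friction is $\Psi_{4}$, where you use $P_{3}^{2}=1$ while Table II literally prints $P_{3}^{1}=1$; since $O_{1}=0$ in $\mathcal{R}_{3}$ the entry as printed would violate the constraint $R_{1}=R_{2}$ and would give $\mathcal{T}=(P_{\mathcal{R}_{1}}+P_{\mathcal{R}_{2}}+P_{\mathcal{R}_{4}}+P_{\mathcal{R}_{5}}-P_{\mathcal{R}_{3}})R_{0}$, so your reading is the intended (corrected) one and the rest of your computation goes through.
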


\section{Simulation Results}
In this section, we conduct numerical simulations to verify the effectiveness of the proposed scheme. In the simulation, the specific Rayleigh fading is considered with channel gain expectations $\Omega_{1}=E\{g_{1}\}$ and $\Omega_{2}=E\{g_{2}\}$.

In the Fig. 3, we plot the probabilities of different instantaneous CSI regions versus the fixed transmission rate $R_{0}$ under different system parameters. In particular, in the subfigure (a), $\sigma_{r}^2=\sigma_{d}^2=1$, $I_{r}=5$, $\Omega_{1}=0.8$ and $\Omega_{2}=0.6$, $P_{1}=P_{2}=25$ dB. In this case, we note that although $P_{1}=P_{2}$, the $S$-$R$ link is stronger than the $R$-$D$ link, which leads to the statistic CSI cases $\Psi_{1}$, $\Psi_{2}$ and $\Psi_{3}$. In the subfigure (b), we set the transmit power $P_{1}=23.75$ dB and $P_{2}=25$ dB. Although $\Omega_{1}>\Omega_{2}$,  but $P_{1}<P_{2}$, which offsets the disparity of the channel variances. Due to the symmetry, there is only one case $\Psi_{3}$. In the subfigure (c), we set $P_{1}=25$ dB and $P_{2}=30$ dB. In this case, although $P_{1}<<P_{2}$, which results to the cases $\Psi_{3}$, $\Psi_{4}$ and $\Psi_{5}$. In this figure, we reveal that the statistic cases from $\Psi_{1}$ to $\Psi_{5}$ exist under different system setups.
\begin{figure}[t]
\centering
\includegraphics[width=3.2in]{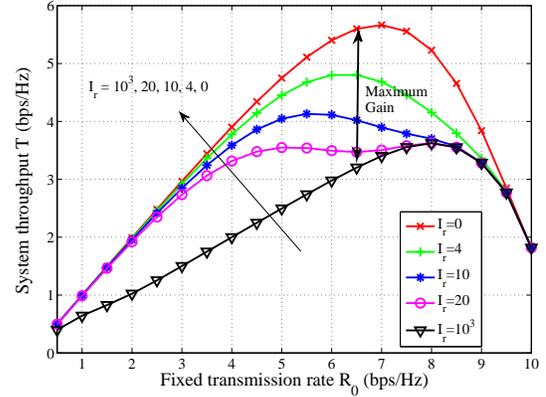}
\vspace{-2mm}
\caption{System throughput vs. the fixed transmission rate $R_{0}$}
\end{figure}

In Fig. 4, we plot the system throughput versus the fixed transmission rate $R_{0}$. In particular, the system parameters are set as follows: $P_{1}=P_{2}=30$ dB, $\sigma_{d}^2=\sigma_{r}^2=1$, $\Omega_{1}=0.8$, $\Omega_{2}=0.6$. We note that the fixed transmission rate $R_{0}$ remarkably affects the system throughput. In the low and strong RSI cases, the system throughput will firstly increase with $R_{0}$, after reaching the maximum value, the system throughput will decrease with the continuing increase of $R_{0}$. Thus, proper design the value of $R_{0}$ can optimize the system performance. However, in the case $I_{r}=20$, there are two values that both maximize the system throughput, which indicates the joint effects of the RSI and fixed rate $R_{0}$ on the system performance. In addition,  in the high $R_{0}$ regime, non-RSI achieves the same performance compared to the strong RSI. The reason is that all the time slots in the region $\mathcal{R}_{1}$ and $\mathcal{R}_{2}$ are allocated to the mode $M_{2}$. However, the sum probability of the region $\mathcal{R}_{1}$ and $\mathcal{R}_{2}$ has no relationship with the variance of the RSI.

\begin{figure}[t]
\centering
\includegraphics[width=3.2in]{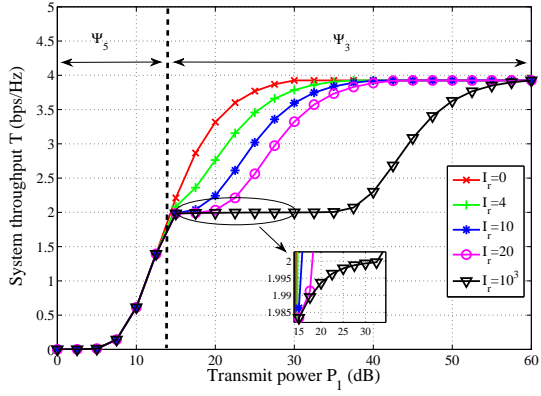}
\vspace{-2mm}
\caption{System throughput vs. the transmit power $P_{1}$.}
\end{figure}

Fig. 5 plots the system throughput versus the transmit power $P_{1}$. We set $P_{2}=30$ dB, $\Omega_{1}=\Omega_{2}=0.8$, $\sigma_{r}^2=\sigma_{d}^2=1$, $R_{0}=4$ bps/Hz. We note that the whole figure can be categorized into two cases: $\Psi_{5}$ in the low transmit power regime and $\Psi_{3}$ in the high transmit power regime. In the case $\Psi_{5}$, the $S$-$R$ link is strengthened with the increase of $P_{1}$ and the system throughput increases significantly. Due to $P_{\mathcal{R}_{1}}$ is very small, the system will almost directly convert to the case $\Psi_{3}$. In the case $\Psi_{3}$, the system throughput first increases very slowly, and then fastly. The reason is that, the increase of $P_{1}$ mainly leads to the increase of $P_{\mathcal{R}_{2}}$. However, the time slots in the region $\mathcal{R}_{2}$ need to be allocated to the modes $M_{1}$ or $M_{2}$. Thus, the system throughput increases slowly. In the high transmit power regime, for example, $P_{1} \in (40, 60)$ dB, the increase of $P_{1}$ mainly leads to the increase of $P_{\mathcal{R}_{1}}$. In the region $R_{1}$, all the time slots are allocated to the mode $M_{3}$. Thus, the system throughput increases rapidly.

Fig. 6 plots the system throughput versus the transmit power $P_{2}$. We note that the system throughput keeps the same in the low $P_{2}$ regime under different RSI cases. The reason is that in the low $P_{2}$ regime, the system performance is restricted by the $R-D$ link, which is none relative to the RSI. With the increase of $P_{2}$, the system throughput will increase proportionally. However, it is observed that for the strong RSI case, i.e., $I_{r}=P_{2}$, the system throughput reach the maximum value when $P_{2}\approx 18$ dB. Then, the throughput will decrease with the increase of $P_{2}$ and finally converge to a constant. The reason is that the RSI increases with the transmit power, which reduces the SINR performance at the relay node.

\begin{figure}[t]
\centering
\includegraphics[width=3.2in]{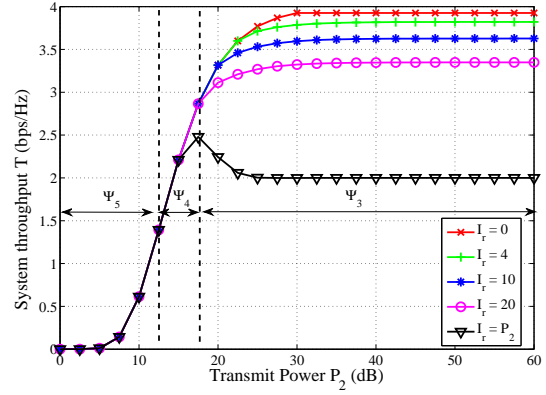}
\vspace{-2mm}
\caption{System Throughput vs. the transmit power $P_{2}$.}
\end{figure}
\section{Conclusion}
In this paper, we designed an optimal hybrid FD/HD scheme for the one-way DF FD relay system with a buffer. We focused on the throughput maximization by the proposed scheme. To solve this problem, we divided the system into four different transmission modes and formulated the original problem as a binary integer problem. By relaxing the problem, we obtained the optimal mode selection probabilities based on the KKT optimal conditions. The proposed scheme not only achieved the optimal FD mode or HD mode selection, but also realized the adaptive $S$-$R$ or $R$-$D$ link selection. Simulation results validated that the proposed scheme could fully exploit the potential of the FD mode.

\appendix
We note that the relaxed problem $\mathcal{P}_{2}$ is a linear programming problem. Since the Karush-Kuhn-Tucker (KKT) conditions are necessary for the optimal solution, we first examine the KKT conditions of the problem $\mathcal{P}_{2}$. The Lagrangian function for the problem $\mathcal{P}_{2}$ is given by
\begin{align}\label{757457}
  &\mathcal{L}(d_{j}(i),\alpha,\beta(i), \mu_{j}(i), \nu_{j}(i))=\notag\\
   &\ -\frac{1}{N} \sum_{i=1}^{N}[d_{2}(i)O_{2}(i)+d_{3}(i)O_{3}(i)]\times R_{0}  \notag\\
   &\ +\ \frac{\alpha_{0}}{N} \sum_{i=1}^{N}[d_{2}(i)O_{2}(i)-d_{1}(i)O_{1}(i)]\times R_{0} \notag\\
   &\ +\beta(i)(\sum_{i=1}^{N}\sum_{j=1}^{4}d_{j}(i)-1)+\sum_{i=1}^{N}\sum_{j=1}^{4}\mu_{j}(i)(d_{j}(i)-1)\notag\\
   &\ -\sum_{i=1}^{N}\nu_{j}(i)d_{j}(i)
\end{align}
where $\alpha_{0},\beta(i),\mu_{j}(i), \nu_{j}(i)$ are the non-negative Lagrange multipliers of the constraints from $\text{C1}$ to $\text{C4}$. For the optimal solution, the derivatives of (\ref{757457}) equal to zero, i.e.,
\begin{equation}\label{472389}
  \frac{\partial \mathcal{L}}{\partial d_{j}(i)} = 0, \forall\ i, j
\end{equation}
for $j=1,2,3,4$, we have
\begin{align}\label{5743978}
  V_{1}(i)=&\ N[\beta(i)+\mu_{1}(i)-\nu_{1}(i)] = \alpha_{0}O_{1}(i)R_{0} \notag\\
  V_{2}(i)=&\ N[\beta(i)+\mu_{2}(i)-\nu_{2}(i)] = (1-\alpha_{0})O_{2}(i)R_{0}\notag\\
  V_{3}(i)=&\ N[\beta(i)+\mu_{3}(i)-\nu_{3}(i)] = O_{3}(i)R_{0} \notag\\
  V_{4}(i)=&\ N[\beta(i)+\mu_{4}(i)-\nu_{4}(i)] = 0
\end{align}
where $V_{j}(i)$ is defined as the selection function. For simplicity, we first examine that the mode $M_{1}$ is selected. In this case, $d_{1}(i)=1$, we get $d_{j}(i)=0, j\in\{2,3,4\}$, thus, the constraints C3 for $j=2, 3,4$ and C4 for $j=1$ are inactive. According to the complementary slackness condition, we get that $\mu_{j}(i)=0, j=2,3,4$  and $\nu_{1}(i)=0$. Here, we find that
\begin{equation}\label{56496}
  V_{1}(i)-V_{j}(i)=N[\mu_{1}(i)+\nu_{j}(i)],\ j=2,3,4
\end{equation}
Since $\mu_{1}(i)\ge 0$ and $\nu_{j}(i)\ge0$, $V_{1}(i)-V_{j}(i)\ge 0, j=2,3,4$. This implies that when $M_{1}$ is selected, $V_{1}(i)\ge V_{j}(i), \forall j, j=2,3,4$. The results will be similar if we set $d_{j}(i)=1,j=2,3,4$. Now we can conclude that the mode with the largest selection function should be selected in each time slot.

Based on the property of the selection function, next we will discuss the different values of $\alpha_{0}$ in the right hand side of (\ref{5743978}). We consider the region $\mathcal{R}_{1}$, in this region, all of $O_{1}(i)$, $O_{2}(i)$ and $O_{3}(i)$ equal to 1. According to the KKT conditions, we know that $\alpha_{0}\ge 0$. First, we consider that $\alpha_{0}=0$, $V_{2}(i)=V_{3}(i)>V_{1}(i)$, indicating mode $M_{2}$ or mode $M_{3}$ should be selected. We assume that the selection probability of the mode $M_{j}$ is $P_{1}^j$, thus the constraint C1 leads to the following equation
\begin{align}\label{623943}
  &P_{\mathcal{R}_{4}}+P_{\mathcal{R}_{5}}= P_{\mathcal{R}_{1}}P_{1}^2+P_{\mathcal{R}_{2}}+P_{\mathcal{R}_{3}}\notag\\
  &P_{1}^2+P_{1}^3=1\notag\\
  \Rightarrow\quad& P_{1}^2=\frac{P_{\mathcal{R}_{4}}+P_{\mathcal{R}_{5}}-P_{\mathcal{R}_{2}}-P_{\mathcal{R}_{3}}}{P_{\mathcal{R}_{1}}}
\end{align}
where $P_{\mathcal{R}_{2}}$ is based on the preliminary 2, i.e., the time slots in the region $\mathcal{R}_{2}$ are preferred to be utilized to balance the constraint C1. The selection of the mode $M_{2}$ in the region $\mathcal{R}_{1}$ means all the time slots in the region $\mathcal{R}_{2}$ are allocated to the mode $M_{2}$. From (\ref{623943}), we note that if $P_{1}^2<1$, we have $P_{\mathcal{R}_{4}}+P_{\mathcal{R}_{5}}-P_{\mathcal{R}_{2}}-P_{\mathcal{R}{1}}< P_{\mathcal{R}_{3}}\leq P_{\mathcal{R}_{4}}+P_{\mathcal{R}_{5}}-P_{\mathcal{R}_{2}}$, i.e., $\Psi_{2}$.

However, if $P_{1}^2=1$, we get $P_{\mathcal{R}_{3}}\leq P_{\mathcal{R}_{4}}+P_{\mathcal{R}_{5}}-P_{\mathcal{R}_{2}}-P_{\mathcal{R}_{1}}$, i.e., $\Psi_{1}$. In this case, even if all the time slots in the region $\mathcal{R}_{1}$ are allocated to the mode $M_{2}$, $R_{1} \geq R_{2}$. Hence, we need to select the silent mode $M_{4}$ in the region $\mathcal{R}_{4}+\mathcal{R}_{5}$ to balance the arrival rate and departure rate. We have the following equation
\begin{align}\label{459545985}
  &(P_{\mathcal{R}_{4}}+P_{\mathcal{R}_{5}})P_{4}^{1}= P_{\mathcal{R}_{1}}+P_{\mathcal{R}_{2}}+P_{\mathcal{R}_{3}}\notag\\
  \Rightarrow\quad& P_{4}^1=\frac{P_{\mathcal{R}_{1}}+P_{\mathcal{R}_{2}}+P_{\mathcal{R}_{3}}}{P_{\mathcal{R}_{4}}+P_{\mathcal{R}_{5}}},
\end{align}
since that there are only two modes $M_{1}$ and $M_{4}$ available in the region $\mathcal{R}_{4}$ and $\mathcal{R}_{5}$, the selection probability of the mode $M_{4}$ is $P_{4}^4=1-P_{4}^1$.

Next, if $\alpha_{0}\in(0,1)$, $V_{3}(i)>\text{max}\{V_{1}(i),V_{2}(i)\}$, thus, the mode $M_{3}$ will be selected with the probability $1$ in the region $\mathcal{R}_{1}$. This case implies that by the adjustment of the allocation of the time slots in the region $\mathcal{R}_{2}$ to the mode $M_{1}$ and $M_{2}$ can realize the equality of the constraint C1. Hence, we have $P_{\mathcal{R}_{4}}+P_{\mathcal{R}_{5}}-P_{\mathcal{R}_{2}}< P_{\mathcal{R}_{3}}\leq P_{\mathcal{R}_{4}}+P_{\mathcal{R}_{5}}+P_{\mathcal{R}_{2}}$, i.e., $\Psi_{3}$. For the region $\mathcal{R}_{2}$, we have
\begin{align} \label{78474849}
&P_{\mathcal{R}_{4}}+P_{\mathcal{R}_{5}}+P_{\mathcal{R}_{2}}P_{2}^{1}= P_{\mathcal{R}_{2}}P_{2}^{2}\notag\\
&P_{2}^{1}+P_{2}^{2}=1\notag\\
\Rightarrow\quad & P_{2}^1=\frac{P_{\mathcal{R}_{2}}+P_{\mathcal{R}_{4}}+P_{\mathcal{R}_{5}}}{2P_{\mathcal{R}_{3}}}, \notag\\
 \quad&P_{2}^2 = 1-P_{2}^1,
\end{align}

If $\alpha_{0}=1$, $V_{1}(i)=V_{3}(i)>V_{2}(i)$, mode $M_{1}$ and $M_{3}$ should be selected in the region $\mathcal{R}_{1}$. This case corresponds to that $P_{\mathcal{R}_{2}}+P_{\mathcal{R}_{4}}+P_{\mathcal{R}_{5}}< P_{\mathcal{R}_{3}}\leq P_{\mathcal{R}_{5}}+P_{\mathcal{R}_{4}}+P_{\mathcal{R}_{2}}+P_{\mathcal{R}_{1}}$, i.e., $\Psi_{4}$. The selection probability can be expressed as the following
\begin{align}\label{14641156}
  &P_{\mathcal{R}_{4}}+P_{\mathcal{R}_{5}}+P_{\mathcal{R}_{2}}+P_{\mathcal{R}_{1}}P_{1}^1= P_{\mathcal{R}_{3}}\notag\\
  &P_{1}^1+P_{1}^3=1\notag\\
  \Rightarrow\quad& P_{1}^1=\frac{P_{\mathcal{R}_{3}}-P_{\mathcal{R}_{2}}-P_{\mathcal{R}_{4}}-P_{\mathcal{R}_{5}}}{P_{\mathcal{R}_{1}}}\notag\\
  &P_{1}^3=1-P_{1}^1,
\end{align}

In this case, the mode $M_{1}$ is selected in the region $\mathcal{R}_{1}$, which means that all the time slots in the region $\mathcal{R}_{2}$ are allocated to the mode $M_{1}$, i.e., $P_{2}^1=1$ in the region $\mathcal{R}_{2}$.

Last, we consider that $\alpha_{0}>1$, we note that in this case, $V_{1}(i)=\text{max}\{V_{j}(i)\}$, $j=1,2,3,4$, meaning that only the mode $M_{1}$ is selected in the region $\mathcal{R}_{1}$, which corresponds to the case $P_{\mathcal{R}_{3}}> P_{\mathcal{R}_{1}}+P_{\mathcal{R}_{2}}+P_{\mathcal{R}_{4}}+P_{\mathcal{R}_{5}}$, i.e., case $\Psi_{5}$. Hence, to satisfy the equality in the constraint C1, the inactive mode $M_{4}$ should be selected in the time slots of the region $\mathcal{R}_{3}$. The selection probability is given by
\begin{align}\label{46845913}
   & P_{\mathcal{R}_{1}}+P_{\mathcal{R}_{2}}+ P_{\mathcal{R}_{4}}+ P_{\mathcal{R}_{5}}=P_{\mathcal{R}_{3}}P_{1}^2\notag \\
    \Rightarrow \quad & P_{3}^2=\frac{P_{\mathcal{R}_{1}}+P_{\mathcal{R}_{2}}+P_{\mathcal{R}_{4}}+P_{\mathcal{R}_{5}}}{P_{\mathcal{R}_{3}}},
\end{align}
and $P_{3}^4=1-P_{3}^2$. Hitherto, all the cases from $\Psi_{1}$ and $\Psi_{5}$ have been discussed. Based on the property of the KKT conditions, the optimal solution of the hybrid FD/HD mode selection problem is obtained. The proof is completed.

\bibliographystyle{IEEEtran}
\bibliography{reference}
\end{document}